\let\cline\cmidrule
\newcommand{\mavxi}{\texttt{\_\_m512i}~}
\theoremstyle{definition}
\theoremstyle{remark}
\newcommand{\uwidehat}[1]{%
  \mathpalette\douwidehat{#1}%
}
\newcommand{\douwidehat}[2]{%
  \sbox0{$\m@th#1\widehat{\hphantom{#2}}$}%
  \sbox2{$\m@th#1x$}
  \sbox4{$\m@th#1#2$}
  \dimen0=\ht0
  \advance\dimen0 -.8\ht2
  \dimen2=\dp4
  \rlap{%
    \raisebox{\dimexpr\dimen0-\dimen2}{%
      \scalebox{1}[-1]{\box0}%
    }%
  }%
  {#2}%
}
    \newcommand{\vast}{\bBigg@{3}}
    \newcommand{\Vast}{\bBigg@{3.5}}
    \newcommand{\vastt}{\bBigg@{4}}
    \newcommand{\Vastt}{\bBigg@{4.5}}
\title[running={Truncated multiplication and batch software implementation for faster mod. exp.}]{Truncated multiplication and batch software SIMD AVX512 implementation for faster Montgomery multiplications and modular exponentiation}
\begin{document}

\maketitle

\keywords[Montgomery Modular multiplication, Modular Exponentiation, Batch Computation, AVX512, SIMD, RSA]{Montgomery Modular multiplication, Modular Exponentiation, Batch Computation, \texttt{AVX512}, SIMD, RSA}

\abstract{This paper presents software implementations of batch computations, dealing with multi-precision integer operations. In this work, we use the Single Instruction Multiple Data (SIMD) \texttt{AVX512} instruction set of the \texttt{x86-64} processors, in particular the vectorized fused multiplier-adder \texttt{VPMADD52}. We focus on batch multiplications, squarings, modular multiplications, modular squarings and constant time modular exponentiations of 8 values using a word-slicing storage. We explore the use of Schoolbook and Karatsuba approaches with operands up to 4108 and 4154 bits respectively. We also introduce a truncated multiplication that speeds up the computation of the Montgomery modular reduction in the context of software implementation. Our Truncated Montgomery modular multiplication improvement offers speed gains of almost 20\% over the conventional non-truncated versions. Compared to the state-of-the-art \texttt{GMP} and \texttt{OpenSSL} libraries, our speedup modular operations are more than 4 times faster.  Compared to \texttt{OpenSSL BN\_mod\_exp\_mont\_consttimex2} using \texttt{AVX512} and \texttt{VPMADD52} in 256-bit registers, in fixed-window exponentiations of sizes $1024$ and $2048$, our 512-bit implementation provides speedups of respectively 1.75 and 1.38, while the 256-bit version speedups are 1.51 and 1.05 for $1024$ and $2048$-bit sizes (batch of 4 values in this case).}

\begin{textabstract}
This paper presents software implementations of batch computations, dealing with multi-precision integer operations. In this work, we use the Single Instruction Multiple Data (SIMD) AVX512 instruction set of the x86-64 processors, in particular the vectorized fused multiplier-adder VPMADD52. We focus on batch multiplications, squarings, modular multiplications, modular squarings and constant time modular exponentiations of 8 values using a word-slicing storage. We explore the use of Schoolbook and Karatsuba approaches with operands up to 4108 and 4154 bits respectively. We also introduce a truncated multiplication that speeds up the computation of the Montgomery modular reduction in the context of software implementation. Our Truncated Montgomery modular multiplication improvement offers speed gains of almost 20 \% over the conventional non-truncated versions. Compared to the state-of-the-art GMP and OpenSSL libraries, our speedup modular operations are more than 4 times faster.  Compared to OpenSSL BN\_mod\_exp\_mont\_consttimex2 using AVX512 and madd52* (madd52hi or madd52lo) in 256-bit registers, in fixed-window exponentiations of sizes $1024$ and $2048$, our 512-bit implementation provides speedups of respectively 1.75 and 1.38, while the 256-bit version speedups are 1.51 and 1.05 for $1024$ and $2048$-bit sizes (batch of 4 values in this case).
\end{textabstract}


\section{Introduction}

\normalsize

The need of multi-precision computation in the context of cryptographic operations arose with the wide use of public key cryptography, linked to the RSA cryptosystem~\cite{RivestSA78}, Digital Signature Algorithm (DSA)~\cite{BruceShcheierThese}, Elliptic Curve Cryptography~\cite{ECC99}, or isogeny-based cryptography~\cite{sike}. While these protocols are destined to give way to the future post-quantum ones in the long term, they are still widely used and the need of high throughput signature and/or verification remains vital. One direction to ensure fast signature or verification computations is to make use of the modern processor features. The \texttt{x86-64} processors include a 64-bit multiplier providing a 128-bit result. This instruction is the foundation of several multi-precision libraries like \texttt{GMP} (see~\cite{gnu_mp}). Since the appearance of the SIMD (Single Instruction Multiple Data) instruction sets, especially the \texttt{AVX} and \texttt{AVX2}, allowing parallel computations, several works attempted to exploit the possibilities of these instructions. In 2012, Gueron and Krasnov in~\cite{gueron2012software} proposed the so called \texttt{RSA\_Z} implementation for \texttt{OpenSSL}. These software implementations take advantage of the 32-bit parallel multipliers, i.e. four 32-bit multiplications (providing four 64-bit results) can be performed simultaneously using the \texttt{AVX2} 256-bit registers. However, these improvements rely on the microarchitecture version used, since the complexity balance in terms of 32-bit multiplications leads to the same multiplication instruction numbers in comparison with conventional 64-bit sequential architecture. Gueron \emph{et al.} in~\cite{gueron2016accelerating,NDruckerGK18} proposed other works using \texttt{AVX512} to improve this balance. However, the carry management and word alignment of the additions often necessitates time consuming operations (shuffles or permutes, shifts, additions...) and the efficiency of the computation of a single multi-precision multiplication using SIMD instruction sets is not as efficient as expected. Furthermore, the carry management leads to the so called reduce-radix approach, that is  using 28 or 27-bit operands in 32-bit multipliers, in order to keep spare bits to ensure the carry management, with a penalty in execution speed. Recently, the availability of the \texttt{VPMADD52} instructions, computing up to 8 operations of the form $a+b\times c$ (\texttt{AVX512}) in 64-bit word-slicing with $b$ and $c$ of size 52 bits, brought some improvements. Nevertheless, the speedup remains small, see Gueron \emph{et al.} in~\cite{gueron2016accelerating} (see also Bos \emph{et al.} in~\cite{bmsz13}, Edamatsu and Takahashi in~\cite{tg19,et20} and Takahashi in~\cite{tak20}).

To counteract the carry management and word alignment penalty, another way is to design batch multipliers (and batch computations in general), using a word slicing representation of a corresponding batch of values. This technique is well known and used in the GPU context (see~\cite{bernstein2009billion,trei2013efficient,mahe2014fast,emmart2016optimizing,emmart2018faster,antao2010elliptic,bos2012low}).

In 2008, Grabher \emph{et al.} \cite{GrabherGP2008} present cryptographic pairing software implementations, some of them using a technique close to the word slicing approach with reduced radix (29 bit witdh) for inter-pairing parallel computations.
Quite recently (2022), Buhrow \emph{et al.} in~\cite{buhrow2022parallel} improved the concept with 32-bit SIMD multipliers using the \texttt{AVX512} instruction set (512-bit registers), applied to CRT-RSA decryptions. These implementations compute 8 modular exponentiations simultaneously, using the reduce-radix architecture, providing up to 1.9 speedup in comparison with the \texttt{OpenSSL} throughput, for $2048$-bit CRT-RSA decryptions. With the \texttt{AVX512} and \texttt{VPMADD52} instructions, and in the context of post-quantum SIKE protocol, Cheng \emph{et al.} in~\cite{cheng2021batching,cheng2022highly} proposed a batch implementation of the key exchange protocol. However, their implementations are specific to the SIKE parameters, in particular the Montgomery friendly primes for the modular operations. They make use of a slightly reduced radix approach (51-bit operands for the 52-bit multiplier of the \texttt{VPMADD52}). To the best of our knowledge, these multipliers are the fastest for the considered sizes (e.g. 503 bits), but with a relaxed carry management due to the specific modular reduction following each multiplication or squaring. Therefore, the implementations of Cheng \emph{et al.} cannot be used or transposed in other contexts.

Recently, the \texttt{OpenSSL} library offers a new approach, which is intermediary between a parallelized computation of a single exponentiation and a batch version, implemented by Kyrillov and Matyukov (see\cite{openssl}). This implementation computes two exponentiations simultaneously using the \texttt{VPMADD52} instruction on 256-bit registers (when available on the platform). This avoids some microarchitecture issues of \texttt{AVX512} while it is compliant to the Intel \texttt{AVX10.1} new extension (see~\cite{intelavx10}). This function processes operands of 1024, 1536 and 2048 bits. To the best of our knowledge, these implementations are the state-of-the-art.

Another topic we have focused on in this paper is the concept of truncated operation applied to modular reduction. In his seminal paper, Barrett in~\cite{barrett} mentioned the idea, but did not implement the concept further. Montgomery in~\cite{montMult_85} did not pay attention to the idea either. Subsequent works on the Montgomery modular reduction or multiplication developed the concept of block or word approaches, like Koç in~\cite{kocAT96}. These are the CIOS (for Coarsely Integrated Operand Scanning) and variants, which aim to improve the performance by optimising the word addition numbers and the memory access patterns. However, these approaches are not suitable for truncated operations. Later on, Hars in~\cite{Hars2005} and~\cite{Hars2006} studied the concept of truncated multiplications and their application to various contexts, mainly Barrett and Montgomery multiplications. This work remained theoretical and focused on possible improvements in hardware implementations but no implementations have been developed. More recently, Ding \emph{et al.} in~\cite{DingS2018} proposed a hardware implementation of a 256-bit ECC processor using a Barrett modular multiplication with a truncated multiplication. Later on in~\cite{DingS2020}, the same authors proposed FPGA implementations of 256 and 512-bit Montgomery modular multipliers based on 3 and 4 way Karatsuba truncated multipliers. However, these approaches cannot be easily transferable to the context of software implementations. Furthermore, since they only present very few versions and sizes, their work is difficult to generalize.

In another recent work, Bos \emph{et al.} in~\cite{BosKP2021} explore parallel implementations of Montgomery Multiplications in both cases: intra-parallelism of CIOS approach with a multi-thread implementation, 
and inter-multiplication parallelism, using a word-slicing representation. However, they do not provide implementation results.

\paragraph*{Contributions.} In this paper, we present software implementations of batch multi-precision multipliers, that is 8 simultaneous multiplications, using a word slicing representation in radix $2^{52}$ to take advantage of the \texttt{AVX512 VPMADD52} instructions. We implement sizes up to $4108$ bits, using the Schoolbook approach and up to $4154$ bits using the Karatsuba approach. We use these multipliers in order to implement Montgomery modular multiplications, and propose, to the best of our knowledge, the first software implementation of Truncated Montgomery modular multiplication, which allows up to 20~\% speedup over non-truncated versions and presents a more than 4 times speedup over the \texttt{OpenSSL RSA\_Z} (not using the 256-bit \texttt{VPMADD52}) and \texttt{GMP} libraries. We make use of these modular multiplications in fixed-window modular exponentiations, for sizes $1024$, $2048$ and $4096$ bits, with speedups up to nearly 4 ($1024$ bits) over the \texttt{OpenSSL BN\_mod\_exp\_mont\_consttime} and 1.75 over the \texttt{OpenSSL BN\_mod\_exp\_mont\_consttimex2} which computes two exponentiations in parallel using the 256-bit \texttt{VPMADD52}. We also implemented a 256-bit Truncated Montgomery exponentiation of our batch \texttt{AVX512} using \texttt{VPMADD52} instructions one in order to compare with the corresponding \texttt{OpenSSL BN\_mod\_exp\_mont\_consttimex2}, with a maximum speedup of 1.51 and 1.05 for 1024 and 2048-bit operands respectively.

\paragraph*{Organisation of the paper.} This paper is organised as follows: Section~\ref{sec:batch} presents the implementation principles of the Schoolbook batch multiplications and squarings, Section~\ref{sec:batchK} deals with the Karatsuba versions of the batch multiplications and squarings, Section~\ref{sec:batchMM} presents the Montgomery modular batch multiplications. In Section~\ref{sec:batchTruncMM}, we then present the Truncated Montgomery modular reduction and its batch implementation. This is followed by the presentations of the implementation performance in Section~\ref{sec:batchPerfs}, including the fixed-window modular exponentiations. A conclusion ends the paper.

\paragraph{Notations.}

\begin{itemize}
    \item $>>$ represents a logic RIGHT SHIFT
    \item  $\bigvee$ represents a logic OR
    \item  $\&$ represents a logic AND
\end{itemize}

Multiprecision numbers will be represented by either bits denoted as $a_{k_i}$, or in 64 bit word arrays denoted as $A64_k[i]$, or in 52 bit word arrays denoted as $A_k[i]$.


\section{Batch Schoolbook multiplications}
\label{sec:batch}

The objective of our implementations of batch Schoolbook multiplications is to achieve eight simultaneous multiplications when using the \texttt{AVX512} instruction set, or a batch of four values when using the \texttt{AVX2} one and will further refer to their corresponding C variables types as \mavxi and \texttt{\_\_m256i} which are respectively 256 and 512 bit vectors, representing integers. In the sequel, we focus on the \mavxi case since the conversion to the other case is straightforward. We use a word slicing approach of the operands as follows.

Let us have a batch of eight values $A_k, 0\leq k <8$ of $t$~bits each. They can be sliced into $t_{64} = \lceil t/64\rceil$ words of 64 bits each.  To be able to use the \texttt{VPMADD52} instructions, we propose to split them into 52-bit words. This requires the use of $t_{52} = \lceil t/52\rceil$ 512-bit lines to store the 8 $A_k$'s.

More formally, we set: $$A_k = \sum^{t}_{i=0} a_{ki} 2^{i}=\sum^{t_{64}}_{i=0} A64_{k}[i] 2^{64\times i} = \sum^{t_{52}}_{i=0} A_{k}[i] 2^{52\times i},$$

and store the eight values as shown in Table~\ref{tab:wordslicing}.

\medskip
\begin{table*}[ht]
\caption{Word Slicing storage of eight values in 52-bit words}
\begin{center}
\scriptsize
\begin{tabular}{c|c|c|c|c|c|c|c|c|}
 &	$\overbrace{~~~~~~~~~~~~}^{\mbox{64 bits}}$	&	$\overbrace{~~~~~~~~~~~~}^{\mbox{64 bits}}$&	$\overbrace{~~~~~~~~~~~~}^{\mbox{64 bits}}$&	$\overbrace{~~~~~~~~~~~~}^{\mbox{64 bits}}$&	$\overbrace{~~~~~~~~~~~~}^{\mbox{64 bits}}$&	$\overbrace{~~~~~~~~~~~~}^{\mbox{64 bits}}$&	$\overbrace{~~~~~~~~~~~~}^{\mbox{64 bits}}$&	$\overbrace{~~~~~~~~~~~}^{\mbox{64 bits}}$	\\
&	$~~~\overbrace{~~~~~~}^{\mbox{52 bits}}$	&	$~~~\overbrace{~~~~~~}^{\mbox{52 bits}}$	&	$~~~\overbrace{~~~~~~}^{\mbox{52 bits}}$	&$~~~\overbrace{~~~~~~}^{\mbox{52 bits}}$	&$~~~\overbrace{~~~~~~}^{\mbox{52 bits}}$	&$~~~\overbrace{~~~~~~}^{\mbox{52 bits}}$	&$~~~\overbrace{~~~~~~}^{\mbox{52 bits}}$	&	$~~~\overbrace{~~~~~}^{\mbox{52 bits}}$	\\

\hline
  $A512_0$	&	$0~|~A_{0}[0]$	&	$0~|~A_{1}[0]$	&	$0~|~A_{2}[0]$	&	$0~|~A_{3}[0]$	&	$0~|~A_{4}[0]$	&	$0~|~A_{5}[0]$	&	$0~|~A_{6}[0]$	&	$0~|~A_{7}[0]$	\\
\hline
 $A512_1$		&	$0~|~A_{0}[1]$	&	$0~|~A_{1}[1]$	&	$0~|~A_{2}[1]$	&	$0~|~A_{3}[1]$	&	$0~|~A_{4}[1]$	&	$0~|~A_{5}[1]$	&	$0~|~A_{6}[1]$	&	$0~|~A_{7}[1]$	\\

\hline
 \vdots &\multicolumn{8}{c|}{\vdots}\\
 \vdots &\multicolumn{8}{c|}{$t_{52}$ 512-bit lines}\\
 
\hline

\end{tabular}
\label{tab:wordslicing}

\end{center}
\end{table*}

\normalsize

We present the \texttt{VPMADD52} instructions, which are integer fused multiply–add instructions, and denote them using the C-intrinsics \cite{intelintrinsics}:

\noindent\texttt{\_mm512\_madd52lo\_epu64(a, b, c)}  

\noindent\texttt{\_mm512\_madd52hi\_epu64(a, b, c)} 

\small
\begin{enumerate}
    \item \texttt{\_\_m512i \_mm512\_madd52lo\_epu64(a, b, c)}
\begin{algorithmic}
\For{$i=0 \dots 7$}
\State // 52 least significant bits of $b\times c$
\State $Dest_i \gets a_i+[(b_i\times c_i) \bmod 2^{52}]$ \EndFor
\end{algorithmic}
    
    \item
    \texttt{\_\_m512i \_mm512\_madd52hi\_epu64(a, b, c)}
    
    \begin{algorithmic}
    \For {$i=0 \dots 7$}
    \State // 52 most significant bits of $b\times c$
    \State $Dest_i \gets a_i+[(b_i\times c_i) >> 52]$ 
    \EndFor
    \end{algorithmic}
\end{enumerate}
\normalsize

In the rest of the paper we rename these instructions to \texttt{madd52lo} and \texttt{madd52hi}.

\subsection{Schoolbook multiplication}\label{sub:school}

We target operands of RSA sizes, and more generally sizes which are multiples of 64. Because we also aim at the \texttt{VPMADD52} instructions, we convert radix $2^{64}$ representation into radix $2^{52}$ representation. This minimizes as much as possible the number of words required to represent the operands. 

For instance, in~\cite{cheng2022highly}, the authors use radix $2^{51}$ which is convenient for the SIKE 503 case they focused on. They use 10-word operands with Schoolbook or Karatsuba approaches. However, this requires 11 words in case of 512-bit operands whereas 10 words are enough in radix $2^{52}$.

After testing several configurations, we choose the Algorithm~\ref{alg:8SBmul} (\texttt{B\_mul}) in order to limit the memory transfer of operands. Thus, we use the same 512-bit line of the first operand, parsing the second operand lines and storing the partial products in the corresponding line of the result. This configuration necessitates a carry management at the end which is performed using a binary mask \textsl{mask52}.

\begin{algorithm}[ht]
  \caption{Batch Schoolbook multiplications, \texttt{B\_mul}}
  \label{alg:8SBmul}
  \small
  \begin{algorithmic}[1]
    \Require  Two batches of 8 values $A_k$ and $B_k$ stored in 52-bit slices in $t_{52}$ \mavxi shares, $mask52$ is a 512-bit batch of eight 52-bit masks.
    \Ensure A batch of 8 values $C_k = A_k\times B_k$ stored in 52-bit slices in $2\times t_{52}$ \mavxi shares.
    \For{$k$ from 0 to 7} {\bf in parallel}
    \State $C_k[0] \gets 0$
        \For{$i$ from $0$ to  $t_{52}-1$}
    	    \For{$j$ from $0$ to  $t_{52}-1$}
    		    \State $C_k[i+j] \gets \textrm{\texttt{madd52lo}}(C_k[i+j],\; A_k[i],\; B_k[j])$
    	    \EndFor
	    \EndFor

    \State $carry_k\gets 0$ \Comment{carry management}
    \For{$i$ from $2$ to  $2\times t_{52}-2$}
		\State  $carry_k \gets  C_k[i-1]>>52$
		\State  $C_k[i] \gets C_k[i] + carry_k$
		\State $C_k[i-1]\gets C_k[i-1] \& mask52$
	\EndFor
	\State  $carry_k \gets  C_k[2\times t_{52}-2]>>52$
	\State  $C_k[2\times t_{52}-1] \gets C_k[2\times t_{52}-1] + carry_k $

      \EndFor
    \State {\bf return} $C$
  \end{algorithmic}
\end{algorithm}

\subsection{Squaring}

We applied a similar approach for the squaring operation (Alg.~\ref{alg:8SBsqu}, \texttt{B\_square}).
In this algorithm, the number of \texttt{VPMADD52} operations is divided by nearly two compared to the multiplication implementation presented in   Section~\ref{sub:school}.

\begin{algorithm}[ht]
  \caption{Batch Schoolbook squaring \texttt{B\_square}}
  \label{alg:8SBsqu}
\small
  \begin{algorithmic}[1]
    \Require  One batch of 8 values $A_k$ stored in 52-bit slices in $t_{52}$ \mavxi shares, $mask52$ is a 512-bit batch of eight 52-bit masks.
    \Ensure A batch of 8 values $C_k = A_k^2$ stored in 52-bit slices in $2\times t_{52}$ \mavxi shares.
    \State $carry\gets 0_{512 bits}$
    \For{$k$ from 0 to 7} {\bf in parallel}
        \For{$\ell$ from $0$ to  $2\times t_{52}-1$}
            \State $C_k[\ell]\gets 0_{512}$
            \For{$(i,j)$ such that $i+j=\ell$ and $j<i$}
                \State $C_k[\ell]\gets \textrm{\texttt{madd52lo}}(C_k[\ell],A_k[i],A_k[j])$
            \EndFor
            \For{$(i+j)$ such that $i+j=\ell-1$ and $j<i$}
                \State $C_k[\ell]\gets \textrm{\texttt{madd52hi}}(C_k[\ell],A_k[i],A_k[j])$
            \EndFor
            \State $C_k[\ell] \gets C_k[\ell]<<1$
            \If{$\ell \bmod 2=0$}
                \State $i\gets \ell/2$
                \State $C_k[\ell]\gets \textrm{\texttt{madd52lo}}(C_k[\ell],A_k[i],A_k[i])$
            \Else
                \State $i\gets \lfloor \ell/2\rfloor$
                \State $C_k[\ell]\gets \textrm{\texttt{madd52hi}}(C_k[\ell],A_k[i],A_k[i])$
            \EndIf
            \State $C_k[\ell] \gets C_k[\ell] + carry_k $
            \State $carry_k\gets C_k[\ell]>>52$
            \State $C_k[\ell]\gets C_k[\ell] \& mask52$
        \EndFor
    \EndFor
    \State return $C$
  \end{algorithmic}
\end{algorithm}

\subsection{Complexity comparison}
As mentioned above, for $t$-bit operands split in 64-bit words, we store 8 operands in $t_{52} = \lceil t/52\rceil$ 512-bit lines. The Schoolbook batch multiplication requires $t_{52}^2$ elementary multiplications, performed using both instructions (\texttt{madd52lo} and \texttt{madd52hi}), that is $2\times t_{52}^2$ \texttt{VPMADD52} instructions. Algorithm~\ref{alg:8SBmul} requires a few extra 512-bit additions in order to manage the carries.

    The batch squaring requires only $t_{52} (t_{52}+1)$ \texttt{VPMADD52} instructions and $t_{52}-1$ left shifts (i.e. multiplications by 2) and additions also in order to handle the carries.
These complexities are summarized in Table~\ref{tab:compSBmul}.

\begin{table*}[b]
  \caption{Number of instructions of batch Schoolbook multiplications}
  \label{tab:compSBmul}
  \begin{center}
  \small
  \begin{tabular}{|c|c|c|c|c|}
  	\hline
  			&	\# \texttt{VPMADD52}	&	\# shifts	&	\# Additions	& \# maskings	\\
	\hline
	Multiplication Alg.~\ref{alg:8SBmul}	&	$2\times t_{52}^2$			&	$2\times t_{52}-2$	&	$2\times t_{52}-2$	&	$2\times t_{52}-2$\\
	\hline
	Squaring Alg.~\ref{alg:8SBsqu}			&	$t_{52}\times (t_{52}+1)$	&	$2\times t_{52}-2$	&	$2\times t_{52}-2$	&	$2\times t_{52}-2$\\
	\hline
  \end{tabular}
  \end{center}
\end{table*}


\section{Batch Karatsuba multiplications}
\label{sec:batchK}

This batch construction is also adapted to Karatsuba multiplication. We remind here this construction.
Let $A$ and $B$ be two $t$-bit operands. We assume that $t$ is even. We first split the operands as follows:
$$A = a_\ell + 2^{t/2}a_h, B = b_\ell + 2^{t/2}b_h.$$
We then compute 3 elementary products (instead of four in the Schoolbook method) and we similarly split them:
$$\left\{\begin{array}{cl}
    D_0 = D_{0\ell} + 2^{t/2}D_{0h} \gets& a_\ell\times b_l,\\
    D_1 = D_{1\ell} + 2^{t/2}D_{1h} \gets& (a_\ell+a_h)\times (b_\ell+b_h),\\
    D_2 = D_{2\ell} + 2^{t/2}D_{2h} \gets& a_h\times b_h.
\end{array} \right.$$

The low-level multiplications are performed with the Schoolbook method. The multiplication width is determined by $(a_\ell+a_h)\times (b_\ell+b_h)$. Finally, we obtain the result as follows:

$$\begin{array}{cl}A\times B =& D_{0\ell}\\
    &+ 2^{t/2}(D_{0h}+D_{1\ell} - D_{0\ell}-D_{2\ell})\\
    &+ 2^t(D_{2\ell}+D_{1h} - D_{0h}-D_{2h})\\
    &+ 2^{3t/2}D_{2h}.
\end{array}$$

The size of $A$ and $B$ and the size of the elementary multiplication are linked. For example, if the largest elementary product requires 520-bit operands, $A$ and $B$ have to be at most $2\times$ 519 bits long. This means that  $t=$1038.
The Table~\ref{tab:operand_size} sums up the sizes we consider in the rest of the paper.

\begin{table}[hb]
     \caption{Operand and elementary multiplication size}
    \label{tab:operand_size}
   \centering
    \small
\begin{tabular}{|c|c|c|c|c|}
    \hline
     Karatsuba mult. size $t$&  518 &   1038    &   2078    &   4154$^*$\\
     \hline
     Elementary product&    260 &   520 &   1040    &   2078 \\
     \hline
     \multicolumn{5}{l}{$^*$ double Karatsuba}
\end{tabular}
\end{table}

In this construction, we represent the operands in two radix $2^{t/2}$ shares. They are radix $2^{52}$ numbers in case of one Karatsuba stage. In the special case of 4154 bits which requires two Karatsuba stages, we have the following representation:

$$A = a_\ell + 2^{2077} a_h,$$
with $a_\ell = a_{\ell\ell} + 2^{1039}a_{\ell h}$ and $a_h = a_{h\ell} + 2^{1039}a_{hh}$. The shares are then represented with 52-bit words. 
All these representations are stored in memory in the batch way presented Section~\ref{sec:batch}.

In terms of complexity, the number of Schoolbook elementary multiplications is 3 for 1038 and 2078 bits (respectively 520 and 1040-bit multiplications), and 9 Schoolbook elementary multiplications of size 1040 bits for the 4154-bit batch Karatsuba multiplications. We provide the instruction count in Table~\ref{tab:compKmul}.

While the \texttt{VPMADD52} instruction count is smaller compared to the Schoolbook case, the other instructions are much more numerous. This explains why the Karatsuba approach is interesting only for the 4154-bit case for the squaring.

\begin{table*}
  \caption{Number of instructions of batch Karatsuba multiplications}
  \label{tab:compKmul}
  \begin{center}
  \small
  \begin{tabular}{|c|c|c|c|c|}
  	\hline
  			&	\# \texttt{VPMADD52}	&	\# shifts	&	\# Add./Sub.	& \# maskings	\\
	\hline
	\begin{tabular}{c}Multiplication\\ with 3 elt. Alg.~\ref{alg:8SBmul}
    \end{tabular}
	&	$\frac 3 2\times t_{52}^2$			&	$8\times t_{52}-6$	&	$10\times t_{52}-7$	&	$8\times t_{52}-6$\\
	\hline
	\begin{tabular}{c}Squaring\\ with 3 elt. Alg.~\ref{alg:8SBsqu}
    \end{tabular}		&	$\frac 3 4 t_{52}\times (t_{52}+2)$	&	$10\times t_{52}-1$	&	$9\times t_{52}+3$	&	$7\times t_{52}+4$\\
	\hline
 
  \end{tabular}
  \end{center}
\end{table*}


\section{Batch Montgomery modular multiplications}
\label{sec:batchMM}

Let us briefly remind of the Montgomery modular multiplication~\cite{montMult_85}. In this operation, the product $T$ of two $t$-bit operands is reduced modulo a $t$-bit modulus $N$ (Alg.~\ref{alg:montred}). The returned result is $C\gets T \times R^{-1}\bmod N$ where $R$ is a power of 2 in binary implementations. In order to handle this multiplicative factor, the input operands are usually converted in the Montgomery representation. This consists of multiplying the initial operands by the square modulo $R$ of $N$, using the same Montgomery modular reduction. By this, one gets:

$$C \gets MontRed(T\times R^2, N) = \frac{A\times R^2} R \mod N$$
and 
$$C = T \times R \bmod N.$$

This renders the representation stable in case of a sequence of multiple multiplications and squarings.
A final Montgomery reduction is enough to convert the result from this Montgomery representation.

\paragraph{Batch Montgomery multiplication.}
With this context, it is possible to make a batch Montgomery reduction modulo 8 different moduli $N_i$ using \texttt{AVX512} instructions. We investigated several approaches.

The first is based on the Schoolbook multiplication and/or squaring (Alg.~\ref{alg:8SBmul} and Alg.~\ref{alg:8SBsqu}) followed by the batch Montgomery reduction (Alg.~\ref{alg:8MontRed}). In this case, we use a batch fused multiplier-adder (\texttt{B\_fma}) operation at line 2 of Alg.~\ref{alg:8MontRed}. This operation has the same cost as a single multiplication because of the use of the \texttt{VPMADD52} instructions. This makes free the addition required in the Montgomery reduction. 

With the Karatsuba multiplication, it is not possible to use a batch Karatsuba \texttt{B\_fma} because of the additions in the reconstruction phase. This explains why the speedups are slightly better for the small sizes in the Schoolbook case.

\sloppypar{We also implemented the word-level Montgomery reduction, adapted from~\cite{kocAT96} also called CIOS and variants (BPS improvement in~\cite{buhrow2022parallel}), only on the Montgomery multiplication. This approach is presented Algorithm~\ref{alg:8BlockMontRed}.}

In both Algorithms~\ref{alg:8MontRed} and~\ref{alg:8BlockMontRed}, the usual final subtraction is not required in our case, since the size of the multiplication is greater enough than the size of the modulus, see Gueron \emph{et al.} in~\cite{gueron2012software}.

\begin{algorithm}[ht]
  \caption{Batch Montgomery reduction}
  \label{alg:8MontRed}
  \small
  \begin{algorithmic}[1]
    \Require  One batch of 8 values $A$ stored in 52 word-slices in $2\times t_{52}$ \mavxi shares, $mask52$ is a 512-bit batch of eight 52-bit masks, the 8 moduli $N$, some precomputed values $N' = (-N)^{-1} \bmod R$ with $R = 2^{52\times t_{52}}$.
    \Ensure A batch of 8 values such that $C_k = A_k\times R^{-1}  \bmod N_k$ stored in 52 word-slices in $t_{52}$ \mavxi shares.
    \State $q\gets \mathtt{B\_mul}(A,N') \bmod R$
    \State $T \gets \mathtt{B\_fma}(q,N,A)$\Comment{$A+q\times N$}
    \State $C \gets T/R$ \Comment{returns the $t$ higher words of $T$}
    \State return $C$
  \end{algorithmic}
\end{algorithm}

\begin{algorithm}[ht]
  \caption{Batch CIOS inspired Montgomery multiplication}
  \label{alg:8BlockMontRed}
  \small
  \begin{algorithmic}[1]
    \Require  Two batches of 8 values $A_k$ and $B_k$ stored in 52 word-slices in $2\times t_{52}$ \mavxi shares, $mask52$ is a 512-bit batch of eight 52-bit masks, the 8 moduli $N_k$, some precomputed values $N'_k = (-N_k)^{-1} \bmod R$ with $R = 2^{52\times t_{52}}$.
    \Ensure A batch of 8 values such that $C_k = A_k\times B_k\times R^{-1}  \bmod N$ stored in 52 word-slices in $t_{52}$ \mavxi shares.
    \For{$k$ from 0 to 7} {\bf in parallel}
        \State $Y_k \gets a_k[0]\cdot B_k$
        \State $q_k \gets |Y_k|_{2^{52}}\cdot N'_k \mod 2^{52}$
        \State $Y_k \gets (Y_k  + q_k\cdot N_k)/2^{52}$ 
        \For{$i=1$ \textbf{to} $t_{52}-1$}
            \State $Y_k \gets Y_k+a_k[i]\cdot B_k$
            \State $q_k \gets |Y_k|_{2^{52}}\cdot N'_k \mod 2^{52}$
            \State $Y_k \gets (Y_k  + q_k\cdot N_k)/2^{52}$ 
        \EndFor
    \EndFor
    \State return $C \gets Y$  \end{algorithmic}
\end{algorithm}


\section{Truncated Batch Montgomery modular multiplications}
\label{sec:batchTruncMM}

The classic Montgomery reduction is reminded in Algorithm~\ref{alg:montred}. This algorithm computes $T \times R^{-1}\bmod N$ where the modulus $N$ is a  $t$-bit integer (with $t \equiv 0 \bmod 64$) 
and the Montgomery constant $R = 2^{52\times t_{52}}$.

\begin{algorithm}[ht]
  \caption{Montgomery modular reduction: \emph{MontRed}}
  \label{alg:montred}
  \small
  \begin{algorithmic}[1]
    \Require  $T < 4N^2$, $N$ the $t$-bit modulus, 
    $R = 2^{52\times t_{52}}$, with $t_{52}\geq t$, precomputed value $N' = (-N)^{-1} \bmod R$.
    \Ensure $C \equiv T \times R^{-1}\bmod N$ and $C<2N$
    \State $q\gets T\times N' \bmod R$ \Comment{$q<R$}
    \State $C \gets \frac {T+q\times N}R$ \Comment{$C<2N$}
    \State \Return $C$
  \end{algorithmic}
\end{algorithm}

At step 2 of Algorithm~\ref{alg:montred}, the division by $R$ is exact because $T+q\times N \equiv 0 \bmod R$. In other words, the $52\times t_{52}$ least significant bits of $T+q\times N$ are all zeroes. 
Because we know the value of its least significant part, the computation of the $52\times t_{52}$ least significant bits of $q\times N$ can be avoided. We only need to estimate the input carry in the sum of the $52\times t_{52}$ most significant bits. These remarks lead to Algorithm~\ref{alg:trunc_montred}.

In this algorithm, $T$ is a $2t_{52}$-word integer and the modulus $N$ is a $t_{52}$-word integer. We denote $T[0]$ the least significant 52-bit word of $T$. At step 2, we compute only the most significant bits of the multiplication $q\times N$, denoted $\widehat{qN} \gets \lfloor \frac{q\times N}R \rfloor$. 
The carry $c_{add}$ to be propagated from the lower part is computed at line 3. Finally, the result is computed at line 4, where $\lfloor T/R \rfloor$ is the $t_{52}$ most significant words of $T$.

\begin{algorithm}[ht]
  \caption{Montgomery modular reduction with truncated multiplication}
  \label{alg:trunc_montred}
  \small
  \begin{algorithmic}[1]
    \Require  $T < 4N^2$, $N$ the $t$-bit modulus, $R = 2^{52\times t_{52}}$, with $t_{52}\geq t$, precomputed value $N' = (-N)^{-1} \bmod R$.
    \Ensure $C \equiv T \times R^{-1}\bmod N$ and $C<2N$
    \State $q\gets T\times N' \bmod R$\Comment{$q<R$}
    \State $\widehat{qN} \gets \lfloor (q\times N)/R \rfloor$
    \State $c_{add} = \mbox{$\bigvee$}_{i=0}^{t_{52}-1} T[i]$
    \State $C \gets\lfloor T/R  \rfloor+ \widehat{qN} + c_{add}$
    \State \Return $C$
  \end{algorithmic}
\end{algorithm}

In the rest of this paper, we deal with batch SIMD software implementations, however, the idea may be applied to classical sequential implementations.

Theorem~\ref{th:correctness} provides the correctness of our Montgomery modular reduction with truncated multiplication. Theorem~\ref{th:qn} describes how to efficiently compute the correct $\widehat{qN}$.
 
\begin{theorem}[Correctness of the Truncated MontMul]
\label{th:correctness}
With $T < 4N^2$, a $t$-bit modulus $N$, an integer $R = 2^{52\times t_{52}}$, with $t_{52}\geq \lceil t/64\rceil$, precomputed value $N' = (-N)^{-1} \bmod R$, Algorithm~\ref{alg:trunc_montred} correctly computes $C \equiv T \times R^{-1}\bmod N$ and $C<2N$. 
\end{theorem}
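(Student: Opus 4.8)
The plan is to show that the truncated Algorithm~\ref{alg:trunc_montred} computes exactly the same value as the classical Algorithm~\ref{alg:montred}, whose correctness is already established in the original Montgomery paper~\cite{montMult_85} (and recalled here). Since step~1 is literally identical in both algorithms, it suffices to prove that
\[
\left\lfloor \frac{T+q\times N}{R}\right\rfloor \;=\; \left\lfloor \frac{T}{R}\right\rfloor + \left\lfloor \frac{q\times N}{R}\right\rfloor + c_{add},
\]
where $c_{add}=\bigvee_{i=0}^{t_{52}-1}T[i]$, interpreted as the integer $1$ if any of the low $t_{52}$ words of $T$ is nonzero and $0$ otherwise. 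The correctness bounds $C\equiv T\times R^{-1}\bmod N$ and $C<2N$ then follow immediately from the already-known properties of Algorithm~\ref{alg:montred}.

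First I would write each of $T$ and $q\times N$ in the split form $T = T_{\mathrm{lo}} + R\lfloor T/R\rfloor$ and $q\times N = (qN)_{\mathrm{lo}} + R\lfloor qN/R\rfloor$, where $T_{\mathrm{lo}} = T\bmod R = \sum_{i=0}^{t_{52}-1}T[i]\,2^{52i}$ and similarly $(qN)_{\mathrm{lo}} = (q\times N)\bmod R$. Adding, the low parts contribute $T_{\mathrm{lo}}+(qN)_{\mathrm{lo}}$, so
\[
\left\lfloor\frac{T+qN}{R}\right\rfloor = \left\lfloor\frac{T}{R}\right\rfloor + \left\lfloor\frac{qN}{R}\right\rfloor + \left\lfloor\frac{T_{\mathrm{lo}}+(qN)_{\mathrm{lo}}}{R}\right\rfloor .
\]
The key observation, which is exactly the exactness-of-division fact highlighted right before the algorithm, is that $T+q\times N\equiv 0\bmod R$, hence $T_{\mathrm{lo}}+(qN)_{\mathrm{lo}}\equiv 0\bmod R$. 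Since each of $T_{\mathrm{lo}}$ and $(qN)_{\mathrm{lo}}$ lies in $[0,R)$, their sum lies in $[0,2R)$ and is divisible by $R$; therefore $T_{\mathrm{lo}}+(qN)_{\mathrm{lo}}$ is either $0$ or $R$, so $\lfloor(T_{\mathrm{lo}}+(qN)_{\mathrm{lo}})/R\rfloor\in\{0,1\}$, and it equals $1$ precisely when $T_{\mathrm{lo}}+(qN)_{\mathrm{lo}}=R>0$, i.e.\ when at least one of the two low parts is nonzero.

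It then remains to argue that this $\{0,1\}$ carry equals $c_{add}=\bigvee_{i=0}^{t_{52}-1}T[i]$, i.e.\ that the carry is $0$ exactly when $T_{\mathrm{lo}}=0$. One direction is clear: if $T_{\mathrm{lo}}=0$ then, because $q = T\times N'\bmod R$ and $N'=(-N)^{-1}\bmod R$, we get $q\equiv T\times N'\equiv 0$, hence $(qN)_{\mathrm{lo}}=0$ and the carry is $0$; conversely if $T_{\mathrm{lo}}\neq 0$ the sum $T_{\mathrm{lo}}+(qN)_{\mathrm{lo}}$ is a positive multiple of $R$, forcing the carry to be $1$. (Equivalently, one can note $T_{\mathrm{lo}}+(qN)_{\mathrm{lo}}=R\cdot c_{add}$ directly from $T_{\mathrm{lo}}\equiv -(qN)_{\mathrm{lo}}\bmod R$ together with $0\le T_{\mathrm{lo}},(qN)_{\mathrm{lo}}<R$.) Substituting back gives the displayed identity, so line~4 of Algorithm~\ref{alg:trunc_montred} returns the same $C$ as line~2 of Algorithm~\ref{alg:montred}, and the correctness statement and the bound $C<2N$ are inherited. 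The one subtlety I expect to need care is the boundary bookkeeping between the ``$t$-bit'' and ``$t_{52}$-word'' descriptions of $N$, $T$ and $R$ — in particular checking that the hypothesis $t_{52}\ge\lceil t/64\rceil$ is exactly what guarantees $R=2^{52t_{52}}>N$ (in fact $R>4N^2/R$ is what is really used, matching the no-final-subtraction remark), so that the interface assumptions of Algorithm~\ref{alg:montred} are met and its conclusion may legitimately be invoked.
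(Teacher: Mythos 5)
Your proof is correct, and it establishes the paper's key identity
$\lfloor (T+qN)/R\rfloor = \lfloor T/R\rfloor + \lfloor qN/R\rfloor + c_{add}$
by a genuinely different (and cleaner) route. The paper argues at the bit level: since $T+q\times N\equiv 0 \bmod R$, the first nonzero bit of $T$ in the low half forces the corresponding bit of $q\times N$ to be $1$, generating a carry, and a carry-chain induction shows that once generated the carry self-propagates up to position $52\times t_{52}$; hence the carry out of the low half is $1$ exactly when some low bit of $T$ is set. You instead work at the integer level: $(T\bmod R)+(qN\bmod R)$ is a multiple of $R$ lying in $[0,2R)$, hence equals $0$ or $R$, and equals $R$ precisely when $T\bmod R\neq 0$ (the converse direction following from $q=T\times N'\bmod R$, or simply from $(qN)\bmod R<R$). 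Your version buys rigor and brevity --- the paper's carry-propagation induction is stated rather informally --- whereas the paper's bit-level picture is what motivates realizing $c_{add}$ as a word-wise OR and sets up the machinery reused in Theorem~\ref{th:qn} for the truncated computation of $\widehat{qN}$. Two minor caveats: like the paper, you silently normalize the bitwise OR $\bigvee_{i=0}^{t_{52}-1}T[i]$ (a nonzero $52$-bit word in general) to a carry in $\{0,1\}$ before the addition in line~4; and your parenthetical condition ``$R>4N^2/R$'' for guaranteeing $C<2N$ should be $R\ge 4N$ (so that $T<4N^2\le RN$ and $C<(RN+RN)/R=2N$), though the paper does not verify this bound either and simply inherits it from Algorithm~\ref{alg:montred}.
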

\begin{proof}
The key point of Algorithm~\ref{alg:trunc_montred} is that we know that $T+q\times N \bmod R \equiv 0$. In other words, the $52\times t_{52}$ least significant bits are zeroes. This makes easier the computation of the carry $c_{add}$. 

If the least significant bit of $T$ is $0$, then the least significant bit of $q\times N$ is also $0$, and no carry is propagated to the next position. This property holds until the first bit of $T$ equals~$1$.

If the $i^{th}$ bit of $T$ is 1, then the $i^{th}$ bit of $qN$ must also be 1 and a carry must be propagated to position $i+1$ in order to ensure that the $i^{th}$ bit of $T+q\times N$ is 0. Next, to ensure that the next bit of $T+q\times N$ is 0, only one of the $i+1^{th}$ bits of $T$ and $qN$ must be 1. These conditions are summarized in the table below:

\begin{center}
    \begin{tabular}{c||c||c|c||}
   bits    &   $i$   &   \multicolumn{2}{c||}{$i+1$}  \\
    $T$ &  1 &   0  &   1   \\
    $q\times N$  &   1   &   1  &   0   \\
    \hline
    generated carry $c_{add}$   &   1    &   \multicolumn{2}{c||}1    \\
\end{tabular}
\end{center}

Thus, if a carry is generated at position $i^{th}$, then another carry is generated at position $i+1^{th}$. As a consequence, the carry generated at position $52\times t_{52} -1$ is 1 only if there is at least one of the $52\times t_{52} -1$ least significant bits that is 1. This carry is output from the $t_{52}-1^{th}$ word.
\begin{equation}\label{eq:c_add}
    c_{add}=c_{add~52\times (t_{52} -1)} = \bigvee_{i=0}^{t_{52}-1} T[i].
\end{equation}

Finally, in Algorithm~\ref{alg:montred}, the result $C$ is computed as follows:
$$C=\left(\sum_{i=0}^{2\times{t_{52}}-1} T[i]2^{i52} + \widehat{qN}2^{52\times t_{52}} +\uwidehat{qN} \right) >>(52\times t_{52}),$$
where $\uwidehat{qN} = qN \bmod R$. As a consequence:
{\small$$C=\!\!\!\!\sum_{i=t_{52}}^{2 \times{t_{52}} -1} \!\!\! T[i] 2^{52(i-t_{52})} + \widehat{qN} + \left( \sum_{i=0}^{t_{52}-1} \!\! T[i]2^{i52} + \uwidehat{qN} \right) \!>>(52\times t_{52}).$$}

The last term of this sum is $c_{add}$ that we compute with equation (\ref{eq:c_add}) at line 3 in Algorithm~\ref{alg:trunc_montred}.
\end{proof}

In Algorithm~\ref{alg:trunc_montred} the computation of $\widehat{qN}$ do not require to compute all the partial products of $q\times N$, as shown in Theorem~\ref{th:qn}. 

\begin{theorem}[Computation of $\widehat{qN}$ ]\label{th:qn}
The correct computation of $\widehat{qN}$ requires only the partial products of $q\times N$ of weight at least $t_{52}-1$.    
\end{theorem}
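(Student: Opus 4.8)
The plan is to quantify the error incurred by discarding the low-weight partial products of $q\times N$ and to show that this error is at most $1$ at the target position, so that it is already absorbed by the carry correction $c_{add}$ handled in Theorem~\ref{th:correctness}. Write $q = \sum_{i=0}^{t_{52}-1} q[i]2^{52i}$ and $N = \sum_{j=0}^{t_{52}-1} N[j]2^{52j}$, so that $q\times N = \sum_{i,j} q[i]N[j]2^{52(i+j)}$. We have $\widehat{qN} = \lfloor (q\times N)/R\rfloor$ with $R = 2^{52\times t_{52}}$, so only the terms with $i+j\geq t_{52}-1$ can possibly contribute to $\widehat{qN}$: a term with $i+j\leq t_{52}-2$ satisfies $q[i]N[j]2^{52(i+j)} < 2^{104}\cdot 2^{52(t_{52}-2)} = 2^{52 t_{52}}= R$, and more importantly the \emph{sum} of all such terms is what I must bound.

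\medskip

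First I would split $q\times N = S_{\text{high}} + S_{\text{low}}$ where $S_{\text{high}}$ collects the partial products with $i+j\geq t_{52}-1$ and $S_{\text{low}}$ those with $i+j\leq t_{52}-2$. Then I would bound $S_{\text{low}}$: each $q[i],N[j] < 2^{52}$, and the number of pairs $(i,j)$ with $i+j = \ell$ is at most $\ell+1$, so
\begin{equation*}
S_{\text{low}} \;<\; \sum_{\ell=0}^{t_{52}-2} (\ell+1)\,2^{104}\,2^{52\ell} \;<\; t_{52}\,2^{104}\,2^{52(t_{52}-2)} \;=\; t_{52}\cdot 2^{52 t_{52}} \;=\; t_{52}\,R .
\end{equation*}
This crude bound is too weak (it only gives $\lfloor (S_{\text{high}}+S_{\text{low}})/R\rfloor \le \lfloor S_{\text{high}}/R\rfloor + t_{52}$), so the key refinement is to observe that the top contributing diagonal of $S_{\text{low}}$ sits at weight $52(t_{52}-2)$, i.e. exactly one 52-bit word \emph{below} the truncation point. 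Hence $S_{\text{low}}/R < t_{52}\,2^{-52}$, which is far less than $1$ for all practical $t_{52}$ (indeed for $t_{52} < 2^{52}$). Therefore $\lfloor (q\times N)/R\rfloor$ differs from $\lfloor S_{\text{high}}/R\rfloor$ by at most the carry that $S_{\text{low}}$ can push into position $52\,t_{52}$, which is $0$ or $1$; and when it is $1$, that carry is by construction the same bit already accounted for as part of $c_{add}$ in the analysis of Theorem~\ref{th:correctness} (the propagation argument there shows every set low bit forces a carry all the way up). So computing $\widehat{qN}$ from $S_{\text{high}}$ alone, i.e. from the partial products of weight at least $t_{52}-1$, yields the value that Algorithm~\ref{alg:trunc_montred} needs.

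\medskip

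I would then make precise which partial products are "of weight at least $t_{52}-1$": these are the pairs $(i,j)$ with $i+j\in\{t_{52}-1,\,t_{52},\,\dots,\,2t_{52}-2\}$, and for the diagonal $i+j = t_{52}-1$ only the high halves of the $q[i]N[j]$ products (the \texttt{madd52hi} outputs) matter, while for $i+j\geq t_{52}$ both halves contribute — this is exactly how the truncated multiplication would be scheduled with \texttt{madd52lo}/\texttt{madd52hi}. Finally I would note that the $c_{add}$ term and a possible carry-in from the discarded low part must be added together, which is consistent with line~3 and line~4 of Algorithm~\ref{alg:trunc_montred} and with the bound $C<2N$ from Theorem~\ref{th:correctness}.

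\medskip

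The main obstacle is the bookkeeping at the boundary diagonal $i+j=t_{52}-2$ versus $i+j=t_{52}-1$: I must argue carefully that the truncation error introduced by dropping everything below weight $52(t_{52}-1)$ is strictly sub-$R$ and in fact contributes \emph{at most one unit} to the word at position $t_{52}$, so that it is genuinely covered by the single carry bit $c_{add}$ rather than requiring a separate multi-unit correction. Getting the off-by-one bookkeeping exactly right — i.e. that "weight at least $t_{52}-1$" (and not "at least $t_{52}$") is both necessary and sufficient — is the delicate part; the rest is the routine geometric-series estimate above.
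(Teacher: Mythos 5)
Your overall strategy --- bound the discarded low-weight part $S_{\text{low}}$ of $q\times N$ by magnitude and argue it contributes at most a single carry already absorbed by $c_{add}$ --- breaks precisely at your ``key refinement.'' A partial product $q[i]N[j]$ is a $104$-bit quantity, so a term on the top discarded diagonal $i+j=t_{52}-2$ satisfies $q[i]N[j]\,2^{52(t_{52}-2)} < 2^{104}\cdot 2^{52(t_{52}-2)} = R$: it reaches all the way up to the truncation point, not one $52$-bit word below it. Summing the $t_{52}-1$ terms of that diagonal already makes $S_{\text{low}}$ as large as roughly $(t_{52}-1)R$, so $S_{\text{low}}/R$ can be of order $t_{52}$, not $t_{52}2^{-52}$; your refinement implicitly treats each partial product as $52$ bits wide. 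Consequently the discarded diagonals can inject a multi-unit carry into word $t_{52}$, and no magnitude estimate can reduce it to a single bit. Note also that $c_{add}$ is the OR of the low words of $T$, not a carry out of $q\times N$, so even a one-unit carry from $S_{\text{low}}$ would not be ``the same bit'' as $c_{add}$ without a separate argument.

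The paper's proof is not an error bound but an exact reconstruction of the carry chain from the identity $T+q\times N\equiv 0 \bmod R$. It forms $UP$ as $T[t_{52}-1]$ plus the \texttt{mul52lo} halves of diagonal $i+j=t_{52}-1$ plus the \texttt{mul52hi} halves of diagonal $i+j=t_{52}-2$ (so both of these \emph{are} needed --- they are exactly the partial products ``of weight $t_{52}-1$,'' contrary to your claim that only the high halves of diagonal $t_{52}-1$ matter at the boundary). Since word $t_{52}-1$ of $T+q\times N$ must be zero, either everything below vanishes (case $c_{add}=0$, where $T\equiv 0 \bmod R$ forces $q=0$), or the $52$ low bits of $UP$ plus the unknown carry $\widetilde{(qN)}[t_{52}-2]>>52$ from the fully discarded diagonals are forced to be all ones; hence adding that unknown carry cannot overflow bit $52$, and the carry out of word $t_{52}-1$ is exactly $(UP>>52)+c_{add}$, computable without evaluating the low diagonals. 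This exploitation of the exactness of the Montgomery division to recover a potentially multi-unit carry for free is the idea missing from your proposal.
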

\begin{proof}
    
Since we know that $T+q\times N \bmod R \equiv 0$, the computation $\widehat{qN}$ can be simplified. Figure~\ref{fig:qN} illustrates the computation of  $q\times N$. It is not necessary to compute all the least significant partial products of $q\times N$ and sum them in order to know which carries to propagate and to compute $\widehat{qN}$ correctly.

\begin{figure}[ht]
    \centering
\includegraphics[width=0.8\columnwidth]{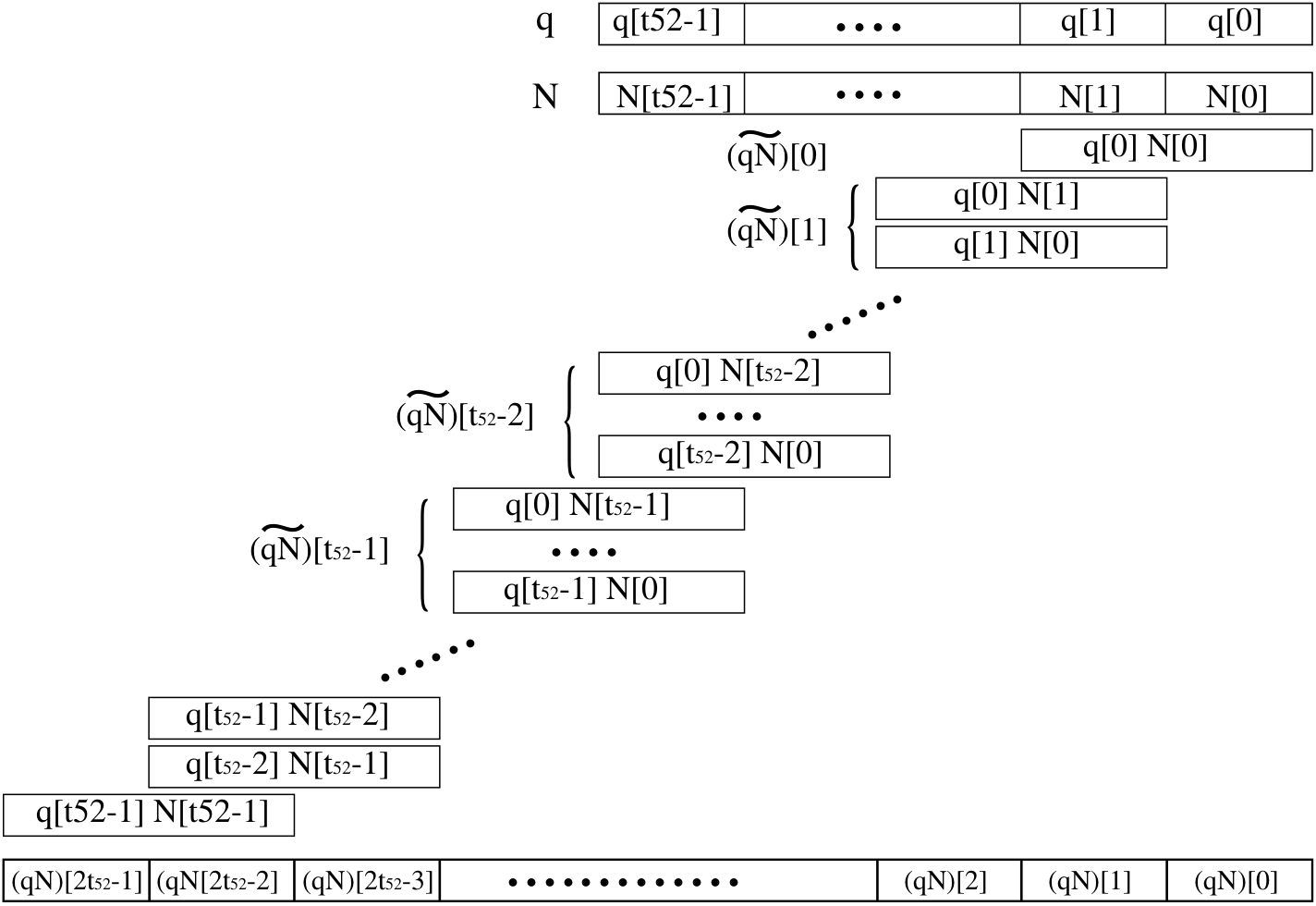}
    \caption{Detail of $q\times N$}
    \label{fig:qN}
\end{figure}

Let us denote $\widetilde{(qN)}[i]$ the sum of the partial products of weight $i$. More formally, the partial product of weight $52t_{52}-1$ is a $\lceil \log_2(t_{52}-1) \rceil + 52$-bit word and is computed as follows: 
\begin{equation}
\label{eq:qn}
	\begin{split}
    \widetilde{(qN)}[t_{52}-1]  \gets & \! \sum^{t_{52}-1}_{i=0,j=0, i+j=t_{52}-1} \!\!\!\!\! \textrm{\texttt{mul52lo}}(q[i],N[j])  \\
     &+ \!\!\!\!\!\! \sum^{t_{52}-2}_{i=0,j=0, i+j=t_{52}-2} \!\!\!\!\! \textrm{\texttt{mul52hi}}(q[i],N[j])\\
     &+ \widetilde{(qN)}[t_{52}-2]>>52,
	\end{split}
\end{equation}
where \texttt{mul52hi} and \texttt{mul52lo} compute respectively the 52 higher and lower bits of two 52-bit operands.
Therefore, the $t_{52}-1^{th}$ word of $T+qN$ (which is known to be 0) is:
$$(T+qN)[t_{52}-1]=(T[t_{52}-1]+ \widetilde{(qN)}[t_{52}-1] + c_{add} ) \bmod 2^{52}$$
and then
\begin{equation}\label{eq:TqN}
    (T[t_{52}-1]+ \widetilde{(qN)}[t_{52}-1] + c_{add} ) \bmod 2^{52} = 0.
\end{equation}

\textbf{If $\mathbf{c_{add}=0}$:} This means that all $T[i]=0$  for $i\leq t_{52}-1$ and so for $q[i]$, because $q=T\times N' \bmod R$ in Alg.~\ref{alg:trunc_montred}. Therefore in equation (\ref{eq:qn}), only the partial products of weight greater than or equal to $t_{52}-1$ are needed.

\textbf{If $\mathbf{c_{add}=1}$:} The only way the equation (\ref{eq:TqN}) can be verified is if the binary vector $T[t_{52}-1]+ \widetilde{(qN)}[t_{52}-1]$ has all its 52 least significant bits set to 1. 

Using Eq. (\ref{eq:qn}), equation (\ref{eq:TqN}) can be written as follows:
\begin{equation}
     (UP + \widetilde{(qN)}[t_{52}-2] >> 52+ c_{add} ) \bmod 2^{52} = 0,
\end{equation}
where 
\[
	\begin{split}
    UP  \gets & T[t_{52}-1] + \!\!\! \sum^{t_{52}-1}_{i=0,j=0, i+j=t_{52}-1} \!\!\!\!\! \!\!\!\!\textrm{\texttt{mul52lo}} (q[i],N[j])  \\
     & ~~~~~~~~~~~~ + \!\!\! \sum^{t_{52}-2}_{i=0,j=0, i+j=t_{52}-2} \!\!\!\!\! \!\! \!\!\textrm{\texttt{mul52hi}}(q[i],N[j]).
	\end{split}
\]

\begin{figure}[ht]
    \centering
\includegraphics[width=0.8\columnwidth]{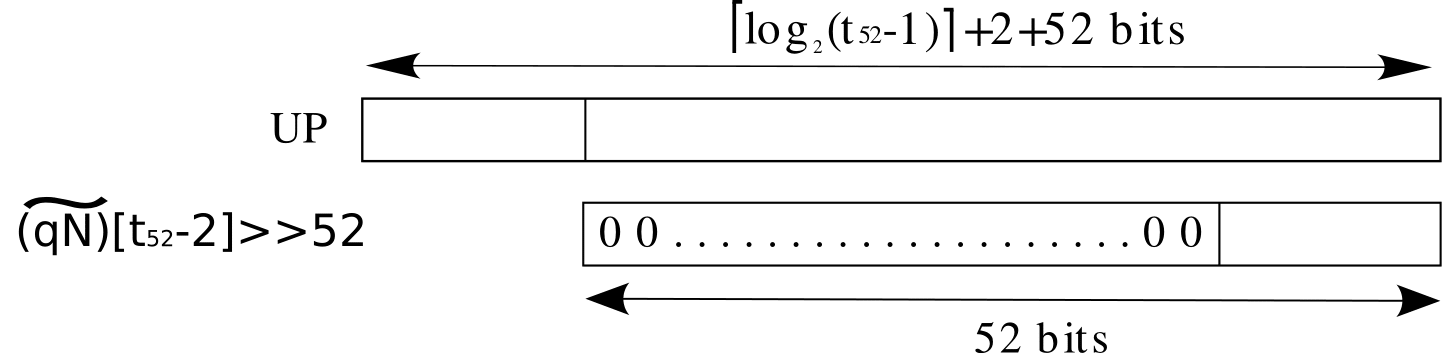}
    \caption{Detail of $UP + (\widetilde{(qN)}[t_{52}-2] >> 52)$}
    \label{fig:}
\end{figure}
Therefore, $\widetilde{(qN)}[t_{52}-2]>>52$ can be computed with the least significant bits of $UP$ and its sum with $UP$ does not generate any carry out of the $52th$ bit. Thus $\widetilde{(qN)}[t_{52}-1]>>52$ is computed only with the most significant bits of $UP$ and the carry $c_{add}$ which is propagated to the $53^{rd}$ bit of UP.
 \medskip
 
As a conclusion $ \widehat{(qN)}[0]$ is computed only with the partial products of weight at least equal $t_{52}-1$: 
\begin{equation}
\label{eq:qnhat}
	\begin{split}
    \widehat{(qN)}[0]  \gets & \! \sum^{t_{52}}_{i=0,j=0, i+j=t_{52}} \!\!\!\!\! \textrm{\texttt{mul52lo}}(q[i],N[j])  \\
     &+ \!\!\!\!\!\! \sum^{t_{52}-1}_{i=0,j=0, i+j=t_{52}-1} \!\!\!\!\! \textrm{\texttt{mul52hi}}(q[i],N[j])\\
     &+ \widetilde{(qN)}[t_{52}-1]>>52 + c_{add}.
	\end{split}
\end{equation}

\end{proof}

\subsection{Complexity of the Truncated Montgomery multiplication}
The truncated multiplication in Algorithm~\ref{alg:trunc_montred} at line 2 can be done in several ways.

\begin{table*}[htb]
  \caption{Batch Schoolbook truncated \texttt{B\_fma} used in Alg.~\ref{alg:8MontRed}, instruction number comparison}
  \label{tab:comptrfma}
  \begin{center}
  \small
  \begin{tabular}{|c|c|c|c|c|c|}
  	\hline
  			&	\# \texttt{VPMADD52}	&	\# shifts	&	\# Additions	& \# maskings	& \# OR\\
	\hline
	\texttt{B\_fma} &	$2 t_{52}^2$			&	$2 t_{52}-2$	&	$2 t_{52}-2$	&	$2 t_{52}-2$ &   -\\
	\hline
	\texttt{trunc. B\_fma}			&	$t_{52}^2 + 3/2 t_{52}-1$	&	$t_{52}$	&	$ t_{52}+2$	&	$t_{52}+2$    &  $t_{52}-1$ \\
	\hline
  \end{tabular}
  \end{center}
\end{table*}

\paragraph{Schoolbook multiplication.}
Here, we compute only the $t_{52}+1$ most significant words of the $q\times N$ product. The instruction count in Table~\ref{tab:comptrfma} shows that the instruction count is divided by nearly two for the truncated \texttt{B\_fma}, which replaces the operation line 2 Algorithme~\ref{alg:8MontRed}. 
Thus, for the whole Montgomery multiplication, this leads to a global complexity of slightly more than 2 multiplications, instead of 2.5 in the classical approaches, including the CIOS. 

\medskip

\paragraph{ Karatsuba multiplication.}

In this approach of the truncated multiplication, the operands have to be split with the following construction:

$$\begin{array}{cl}trunc(A\times B) =&2^t(D_{2\ell}+D_{1h} - D_{0h}-D_{2h})
    + 2^{3t/2}D_{2h}.
\end{array}$$

We therefore compute only $D_{2\ell}, D_{2h}, D_{1h}~$ and $~ D_{0h}$, that is one and two halves of elementary multiplications instead of three. The cost of this truncated multiplication is roughly two-thirds of a complete multiplication. And we need also half of a whole addition to achieve the last step of the Montgomery reduction.

The halves of elementary multiplications are computed using the same approach than the one for the Schoolbook. The correctness is ensured by the same kind of carry evaluation and propagation.

 Thus, in this case, for the whole Montgomery multiplication, this leads to a global complexity of a little more than 2.33 multiplications with one stage Karatsuba, instead of 2.0 for the truncated approaches with Schoolbook, or 2.67 for the conventional approaches. This values are respectively 2.56 and 2.78 for the two-stage Karatsuba.


\section{Performances of the implementations}
\label{sec:batchPerfs}

Our batch multiplications, batch Montgomery multiplications and the corresponding exponentiations have been implemented in C. All the source codes are available at \url{https://github.com/lea-gl/TruncatedBatchSIMDAVX512MontgomeryMultiplicationsModularExponentiation.git}.

In this section, after the presentation of the performance measurement procedure, we provide the results of the batch \texttt{AVX512} multiplications, Batch Montgomery multiplications and squarings, and corresponding exponentiations. This section concludes with the performances of the 256-bit batch exponentiation counterparts. All these experiments are compared to state-of-the-art \texttt{GMP} and \texttt{OpenSSL} performances, compiled and run on the same platform and using the same measurement procedure.

\subsection{Performances measurement procedure}

It is known that the intensive use of vectorized computation with \texttt{AVX2} and \texttt{AVX512} extensions can lead to penalties due to the resulting high power consumption \cite{throttling2017}. Intel processors have a limited power budget and may reduce their frequency when executing complex \texttt{AVX2} and \texttt{AVX512} instructions. This might also affect the execution of adjacent non-vectorized code. However, the impact of this issue depends on the power budget allowed by the processor and how efficiently it is cooled. Our goal here is to provide a fair evaluation of the sequential and our parallel algorithms. 
In order to perform this comparison as fairly as possible, we focused on measuring clock cycles.
Measurements were performed on a 
Dell Inspiron laptop with a \texttt{tiger lake} processor.
\begin{verbatim}
vendor_id	: GenuineIntel
cpu family	: 6
model		: 140
model name	: 11th Gen Intel(R) Core(TM) i7-1165G7 @ 2.80GHz
\end{verbatim}
\normalsize
The compiler is \texttt{gcc} version 9.4.0, the compiler options are as follows:

\noindent\texttt{-O3 -funroll-all-loops -g -march=native -lgmp  -lcrypto}.

\noindent We kept the \texttt{-funroll-all-loops} option though it does not provide significant improvements. We follow the same kind of test procedure as described in~\cite{NDruckerGK18} or in~\cite{RobertV22}:
\begin {itemize}
	\item the \textit{Turbo-Boost}\textregistered~is deactivated during the tests;
	\item 1000 runs are executed in order to "warm-up" the cache memory;
	\item 50 random data sets are generated, and for each data set
      the minimum of the execution clock cycle numbers over a batch of 1000 runs is recorded;
	\item the performance is the average of all these minimums;
\end{itemize}

The clock cycles have been counted with \texttt{rdtsc/rdtscp} instructions.

\subsection{Performances of the batch multipliers}

We have implemented both batch squaring and multiplication for several sizes. The timings are shown in Table~\ref{tab:perfmulV2} where the best results are in bold. We target operands having between 260 and 4154 bits. Due to the Karatsuba splitting, the size of Karatsuba implementations are slightly different. Karatsuba multiplication is inefficient for small operands so we have not implemented it for 260-bit operands. The 4154-bit operands are large enough to permit a double Karatsuba splitting.

We have compared our implementations with 8 successive \texttt{GMP} low level multiplications \texttt{mpn\_mul\_n()}. It can be observed that the batch approach takes advantage of vector instructions for both squaring and multiplication. The batch approach is 5 to 9 times faster than \texttt{GMP}. For large enough operands, the Karatsuba approach is the fastest.

\begin{table*}[ht]
	\caption{Batch Schoolbook (SB) and Karatsuba multiplications, clock cycles.}
	\label{tab:perfmulV2}

		\begin{center}
            \small
			\begin{tabular}{|*{6}{c|}}
				\hline
				op. size (bits)	SB			&	260	&	520	&	1040	&	2080	&	4108	\\
				op. size (bits)	Karatsuba				&		&	518	&	1038	&	2078	&	4154	\\
				\hline
				\hline
				\multicolumn{6}{|c|}{\# clock cycles}\\
				\hline
				\texttt{GMP} \texttt{mpn\_mul\_n()} ($\times 8$)	&	574	&	1526	&	5213	&	16361	&	50099	\\
				\hline
				\texttt{GMP} \texttt{mpn\_sqr()} ($\times 8$)	&	376	&	967	&	2994	&	9871	&	31322	\\
				\hline
				\hline
				Mul. Schoolbook					&	\bf 61	&	\bf 223	&	890	&	4076	&	16326	\\
				\hline
				Mul. Karatsuba				&		&	258	&	\bf 849	&	\bf 3054	&	\bf 10367	\\
				\hline
				\hline
				Squaring Schoolbook				&	\bf 49	&	\bf 149	&	\bf 501	&	\bf 1898	&	8923	\\
				\hline
				Squaring Karatsuba			&		&	220	&	622	&	1964	&	\bf 6882	\\
				\hline
			\end{tabular}
			
		\end{center}
\end{table*}

A few comments on the results:

\begin{itemize}
    \item Concerning the multiplication, the Schoolbook Algorithm~\ref{alg:8SBmul} is better than its Karatsuba counterpart for the 260 and 520-bit sizes. This is consistent with the complexities
    \item Concerning the squaring, one can see that the Karatsuba approach is better only in the 4154-bit case. This is explained by the fact that the Schoolbook squaring (Algorithm~\ref{alg:8SBsqu}) makes intensive use of the \texttt{madd52lo} and \texttt{madd52hi} instructions with no use of additions except for the carry management, whereas in the Karatsuba case, except for the elementary squarings, the final reconstruction requires a lot of additions which can not be saved in the same way as in Schoolbook case. Thus, the threshold for better efficiency in the Karatsuba case is much higher in comparison with the multiplication case.
\end{itemize}

\subsection{Performances of the batch Montgomery multiplication and squaring}

The performances of the batch Montgomery multiplications and squarings are given in Table~\ref{tab:perfmontmul}, for the classical and truncated versions. We evaluated our implementations for $1024$, $2048$ and $4096$-bit operands.

Because of the interleaved word-size multiplications, Algorithm~\ref{alg:8BlockMontRed} has only been implemented with the Schoolbook multiplication. The $1024$-bit Karatsuba multiplication is slower than the Schoolbook multiplication. As a consequence, we have not implemented Truncated Karatsuba versions for this size.  

For comparison sake, we implemented Montgomery modular multiplication and squaring using \texttt{GMP} \texttt{mpn} Montgomery operations, and \texttt{OpenSSL BN\_mod\_mul\_montgomery} functions. However, we have not found a specific \texttt{OpenSSL} counterpart of the Montgomery modular squaring. The best timings are in bold in Table~\ref{tab:perfmontmul}.
The best implementations are more than 4 times faster than \texttt{OpenSSL} in all cases. Furthermore, the truncated approach is about 21~\% faster than the conventional Schoolbook Montgomery multiplication, and up to nearly 28~\% faster than the 4108-bit squaring.

For the Karatsuba approaches, the truncated version is almost 21~\% faster than the classical multiplication for $2048$-bit operands and up. In the case of the squaring the improvement is about 15~\%. Even for the largest size (4154 bits), the Truncated Karatsuba modular squaring remains slightly slower than its Schoolbook counterpart, while the Karatsuba modular multiplication is nearly 10 \% faster than the Schoolbook version.

\begin{table*}[htp]
\caption{Batch Montgomery multiplications and squarings (normal and truncated) \# clock cycles.}
\label{tab:perfmontmul}
\small
\begin{tabular}{|c|c||*{3}{c|}|*{3}{c|}|}
				\hline
			\multicolumn{2}{|c||}{ } &	\multicolumn{3}{|c||}{Montgomery  }	&	\multicolumn{3}{|c||}{Montgomery  }\\
			\multicolumn{2}{|c||}{ } &	\multicolumn{3}{|c||}{ modular }	&	\multicolumn{3}{|c||}{ modular }\\
			\multicolumn{2}{|c||}{ } &	\multicolumn{3}{|c||}{  multiplications}	&	\multicolumn{3}{|c||}{  squarings}\\
				\hline
				\multicolumn{2}{|c||}{Modulus Size}	&	 $1024$	&	 $2048$	&	 $4096$	&	 $1024$	&	 $2048$	&	 $4096$\\
				\hline
				\hline
				\multicolumn{2}{|c||}{{ \texttt{GMP} } ($\times 8$)}
				
					&	9862	&	35497	&	120342	&	8358	&	30465	&	101373	\\
				\hline
				\multicolumn{2}{|c||}{{ \texttt{OpenSSL}} ($\times 8$)}     
     &	7949	&	29928	&	116898	&	-	&	-	&	-		\\
				\hline
				\hline
				\bf Batch & \bf Multiplication & \multicolumn{6}{|c||}{\bf This work}\\
    \hline
    \hline
               		Algorithm~\ref{alg:8MontRed} & Schoolbook  &	2276	&	8696	&	38609	&	1885	&	7154	&	32107\\
				\hline
				Algorithm~\ref{alg:8BlockMontRed}	& Schoolbook  &	2162	&	8936	&	42730	&	-	&	-	&	-	\\
                \hline
                	 Algorithm~\ref{alg:8MontRed}	&	 Truncated Schoolbook & \bf	{1847}	&	7306	&	30492	&	\bf	{1439}	&	\bf	{5548}	&	\bf	{23413}   \\
                  \hline
				\multicolumn{2}{|c||}{speedup vs \texttt{GMP}}	&	5.34	&	4.86	&	3.95	&	5.81	&	5.49	&	4.33	\\	
                  \hline
				\multicolumn{2}{|c||}{speedup vs \texttt{OpenSSL}}	&	4.31	&	4.10	&	3.69	&	-	&	-	&	-	\\	
               \hline
               \hline
                Algorithm~\ref{alg:8MontRed} & Karatsuba		&	2423	&    8400	&  34628	&	2191	&	7330	&	28841	\\
				\hline
                	 Algorithm~\ref{alg:8MontRed}	& Truncated Karatsuba &	-	&\bf{7286}	&\bf {27594}	&	-	&	6202	&	23736		\\
                  \hline
				\multicolumn{2}{|c||}{speedup vs \texttt{GMP}}	&	4.07	&	4.87	&	4.36	&	3.81	&	4.91 &	4.27		\\	
                  \hline
				\multicolumn{2}{|c||}{speedup vs \texttt{OpenSSL} }	&	3.28 &	4.12	&	4.24	&	-	&	-	&	-	\\	
				\hline
			\end{tabular}
\end{table*}

\subsection{Window exponentiation with Truncated Montgomery Modular Mutiplication}

We have implemented Left-to-Right fixed-window exponentiation in a constant time fashion for $1024$, $2048$ and $4096$-bit operands, i.e. the modulus size. These implementations make use of the batch Montgomery squarings, \texttt{B\_fma}s and multiplications mentioned above.

\begin{algorithm}[ht]
  \caption{Constant-time Batch Fixed-Window Left-to-Right Exponentiation}
  \label{alg:8FWExp}
  \small
  \begin{algorithmic}[1]
    \Require  Eight values $a_k$, the corresponding eight $s$-bit exponents and moduli $e_k$ and $m_k$, all stored in 64-bit word arrays, the window width $w$.
    \Ensure The eight modular exponentiations $y_k = a_k^{e_k} \bmod m_k$
    
    //batch of 8 values $A_k$ stored in 52-bit slices in $t_{52}$ \mavxi shares
    \State $A_k\gets$\texttt{expand}$(a_0,\hdots,a_7)$

    //batch of 8 moduli $M_k$ stored in 52-bit slices in $t_{52}$ \mavxi shares
    \State $M_k\gets$\texttt{expand}$(m_0,\hdots,m_7)$

    //batch of 8 exponents $E_k$ stored in 64-bit slices in $t_{64}$ \mavxi shares
    \State $E_k\gets$\texttt{expand}$_{64}(e_0,\hdots,e_7)$

    \For{$k$ from 0 to 7} {\bf in parallel}
        \State $Y_k\gets \mathbf 1$ // batch of ones
        \For{$i$ from 0 to $2^w$} //precomputation
            \State  $G_k[i] \gets A_k^i \bmod M_k$
        \EndFor
        \For{$i$ from $s-2w$ to 0 by $w$}// main loop
            \State $b_k\gets E_k[i,i+w-1]$ // $w$ bits of $E_k$
            \State $tmp_k \gets G_k(b_k)$ // constant-time batch selection
            \State $Y_k \gets Y_k^{2^w} \bmod M_k$ // $w$-batch Montgomery squarings
            \State $Y_k\gets Y_k\times tmp_k \bmod M_k$
        \EndFor
        \State loop epilog if necessary
    \EndFor

    //backward conversion of the 8 results in 64-bit word arrays
    \State $y_k\gets$ \texttt{contract}$(Y_k)$
    \State return the eight results $y_k = a_k^{e_k}\bmod m_k$
    
    \end{algorithmic}
\end{algorithm}

They aim to compare our approach for the modular operations with state-of-the-art modular exponentiations. The exponentiation functions take as arguments big integers represented by 64-bit word arrays, identical to those used in the low-level functions of the \texttt{GMP} library. The result is stored in the same fashion.

Thus, the exponentiation is processed as follows:

\begin{itemize}
    \item conversion of a batch of operands stored in 64-bit word arrays in word-slicing representation.
    \item computation of the batch exponentiation
    \item backward conversion to a batch of results stored in 64-bit word arrays.
\end{itemize}

This conversion is implemented in conventional C using maskings and shiftings. Since the complexity is linear in the operand size, the cost remains negligible. Nevertheless, we provide in Table~\ref{tab:batchconversion} the timings of our implementations in clock cycles. The forward conversion from the conventional representation to the batch 52-bit representation is called \texttt{expand} and the backward conversion is called \texttt{contract}.

\begin{table*}[htp]
        \caption{Batch conversions, $\# 10^3$ clock cycles.}
	    \label{tab:batchconversion}
    \begin{center}
        \small
        \begin{tabular}{c|c|c|c|c|}
        
        \cline{2-5}
        &   size    &  1040    &   2080       &    4108    \\
        \hline
        \multirow{2}{*}{\# clock cycles}   &   
          \texttt{expand}   &  1165   &   2411  &   4790\\
         \cline{2-5}
          &\texttt{contract}   & 1726 &   3274   &   6505\\
        \hline
        \end{tabular}
    \end{center}
\end{table*}
We tested window sizes from 1 (L-R Square-and-Multiply-always) to 5. The best window size is 4 for $1024$-bit moduli and 5 in the other cases. 
We used the fastest modular squaring and multiplication. For $1024$-bit moduli, we used Algorithm~\ref{alg:8BlockMontRed}. In the other cases, we used Algorithm~\ref{alg:8MontRed}. This approach is shown Algorithm~\ref{alg:8FWExp}.

\subsubsection{Experimentation of \texttt{AVX512} versions}

The timings are summarized in the last three columns of Table~\ref{tab:perfmontexp}, for the implemented modulus sizes (1024, 2048 and 4096 bits). 

Whatever the multiplication used (Schoolbook or Karatsuba), the truncated version is always faster than the classical version. The improvement ranges from 13\% faster with the $4096$-bit Truncated Karatsuba, to  20~\% faster with the $1024$-bit Truncated Schoolbook version.
As expected, the Karatsuba approach is only better for the largest size of $4096$ bits. 

We compare these results with \texttt{GMP}  and \texttt{OpenSSL} libraries, providing the timings for eight modular exponentiation computations.
\begin{itemize}
    \item The \texttt{GMP} version is the 6.2.0 and we used the function \texttt{mpn\_sec\_powm}, which is specifically designed for cryptographic use.
    \item The \texttt{OpenSSL} version is the 3.2.1. This version is compiled on our platform and provides the \texttt{RSA\_Z} operations, which make use of the \texttt{AVX2} instruction set. We measured the performance of two functions:
    \begin{itemize}
        \item \texttt{BN\_mod\_exp\_mont\_consttime}, which is the \texttt{RSA\_Z} implementation (\texttt{AVX2} version, see~\cite{gueron2012software}).
        \item \sloppypar{\texttt{BN\_mod\_exp\_mont\_consttimex2}, which computes 2 exponentiations simultaneously. For 1024 and 2048-bit operands, this function implements the \texttt{VPMADD52} instructions, but in 256-bit registers. For the 4096-bit operands, this function falls back on two calls of \texttt{BN\_mod\_exp\_mont\_consttime}.} This explains why we do not present 4096-bit performance results for this function Table~\ref{tab:perfmontexp}.
    \end{itemize}
    These functions are constant-time fixed-window exponentiations, and implement a CIOS-like Montgomery modular multiplication and squaring.
\end{itemize}

The \texttt{OpenSSL} functions offers better results than \texttt{GMP} ones.

\sloppypar{Our implementation uses a constant time fixed-window approach similar as the one of the \texttt{OpenSSL} functions. Compared to \texttt{OpenSSL} \texttt{BN\_mod\_exp\_mont\_consttime}, the best speedup of our implementations is achieved for the $1024$-bit operands, with almost 4 times fewer cycles per exponentiation (our Truncated Schoolbook). The best other cases provide speedups of 3.71 and 3.37, respectively, for the $2048$-bit Schoolbook and the  $4096$-bit Karatsuba.}

Compared to \texttt{OpenSSL} \texttt{BN\_mod\_exp\_mont\_consttimex2}, the best speedup of our implementations is achieved for the $1024$-bit operands, with 1.75 times fewer cycles per exponentiation (our Truncated Schoolbook). The best other cases provide speedups of 1.38 and 1.27, respectively, for the $2048$-bit Schoolbook and Karatsuba.

One may notice that the non-truncated versions remain better than their \texttt{RSA\_Z} \texttt{BN\_mod\_exp\_mont\_consttime} or \texttt{BN\_mod\_exp\_mont\_consttimex2} counterparts.

\begin{table}[htp]
	\caption{Batch of 8 modular fixed-window exponentiations, $\# 10^3$ clock cycles.}
	\label{tab:perfmontexp}
		\begin{center}
		\small
			\begin{tabular}{|c|c||*{3}{c|}|}
				\hline
			\multicolumn{5}{|c||}{\large \bf $\times 8$ Modular Fixed-Window exponentiation ( $\times10^3$\#cc)}\\
				\hline
				\multicolumn{2}{|c||}{Modulus Size}	&	 $1024$	&	 $2048$	&	 $4096$	\\
				\hline
				\hline
				\multicolumn{2}{|c||}{{ \texttt{GMP} } ($\times 8$)}
				
					&	11333	&	81288	&	614637\\
				\hline
				\multicolumn{2}{|c||}{{ \texttt{OpenSSL}  \texttt{BN\_mod\_exp\_mont\_consttime}} ($\times 8$)}    
    &	8313	&	58445	&	434359	\\
				\hline
				\multicolumn{2}{|c||}{{ \texttt{OpenSSL}  \texttt{BN\_mod\_exp\_mont\_consttimex2}} ($\times 4$)}   
    &	3648	&	21674	&	442533	\\
				\hline
				\hline
				\bf Batch & \bf multiplication & \multicolumn{3}{|c||}{\bf This work}\\
    \hline
    \hline
               		Algorithm~\ref{alg:8MontRed} \&~\ref{alg:8BlockMontRed} & Schoolbook  &    2589    &   19678   &   177095  \\

                \hline
                	 Algorithm~\ref{alg:8MontRed}	&	 Truncated Schoolbook &	\bf\underline{2090}	&\bf\underline{15736}	&	131753   \\
                  \hline
				\multicolumn{2}{|c||}{speedup vs \texttt{GMP}}	&	5.42	&	5.17	&	4.66\\	
                  \hline
				\multicolumn{2}{|c||}{speedup vs \texttt{OpenSSL} \texttt{BN\_mod\_exp\_mont\_consttime}}	&	3.98	&	3.71	&	3.30\\	
                  \hline
				\multicolumn{2}{|c||}{speedup vs \texttt{OpenSSL} \texttt{BN\_mod\_exp\_mont\_consttimex2}}	&	1.75 &	1.38	&	-\\	
               \hline
               \hline
                Algorithm~\ref{alg:8MontRed} & Karatsuba		&	-	&	19717	&	144567\\
				\hline
                	 Algorithm~\ref{alg:8MontRed}	& Truncated Karatsuba &	-	&	17124	&	\bf\underline{129015}	\\
                  \hline
				\multicolumn{2}{|c||}{speedup vs \texttt{GMP}}	&	-	&	4.75	&	4.76	\\	
                  \hline
				\multicolumn{2}{|c||}{speedup vs \texttt{OpenSSL} \texttt{BN\_mod\_exp\_mont\_consttime}}	&	-  &3.41	&	3.37	\\	
                  \hline
				\multicolumn{2}{|c||}{speedup vs texttt{OpenSSL} \texttt{BN\_mod\_exp\_mont\_consttimex2}}	&	-  &  1.27	&	-	\\	
				\hline
			\end{tabular}
		\end{center}
\end{table}

\subsubsection{Experimentation of 256-bit versions}

In order to provide a fair comparison with the \texttt{BN\_mod\_exp\_mont\_consttimex2} \texttt{OpenSSL} implementation using the 256-bit vectorized fused multiplier-adder (\texttt{\_mm256\_madd52*\_epu64(a, b, c)}), we derived 256-bit versions from the previous implementations, computing a batch of four values instead of eight for 512-bit versions.

\paragraph{Comparison between 256-bit and 512-bit implementations.}

We discuss here the register size impact. In our implementations, the difference between both 512-bit and 256-bit versions is in the registers used: respectively \texttt{zmm} and \texttt{ymm}. Thus, one might expect the retired instruction number to be the same between the two versions. However, this is not the case for the clock cycle numbers, since the instruction throughput in the 256-bit case is much lower than that of the 512-bit instructions. In order to check the performance level of both versions, we provide the clock cycle numbers per exponentiation of both versions. In other words, this is the batch delay divided by the number of operations in the batch, 8 and 4 respectively (Table~\ref{tab:ifma256-512comp}).

\begin{table}[htp]
	\caption{Batch Modular fixed-window exponentiations, $\# 10^3$ clock cycles per exponentiation.}
	\label{tab:ifma256-512comp}
		\begin{center}
		\small
			\begin{tabular}{|c|c||*{2}{c|}|}
				\hline
			\multicolumn{4}{|c||}{\large \bf Modular Fixed-Window exponentiation}\\
				\hline
				\multicolumn{2}{|c||}{\multirow{2}{*}{Register Type}}	&	512-bit 	&	 	256-bit	\\
                \multicolumn{2}{|c||}{}	&$\times10^3$\#cc/8&$\times10^3$\#cc/4\\
				\hline
				\hline
				\bf Mod.size & \bf Multiplication & \multicolumn{2}{|c||}{\bf This work}\\
    \hline
    \hline
               		\multirow{2}{*}{1024 bits} & Schoolbook  &    325    &   419     \\
                \cline{2-4}
                	 &	 Truncated Schoolbook &	{263}	&{301}  \\
    \hline
    \hline
               		\multirow{2}{*}{2048 bits} & Schoolbook  &   2401     &    2918    \\
                \cline{2-4}
                	 &	 Truncated Schoolbook &	{1965}	&{2583}  \\
               \hline
			\end{tabular}
		\end{center}
\end{table}

In any case, the clock cycle number per exponentiation is better in the \texttt{AVX512} configurations. The advantage ranges from 14.4\% (Truncated - 1024 bits) to 31.5 \% (Truncated - 2048 bits) while it is between 20 and 30 \% in the other configurations.

\paragraph{Comparison with \texttt{OpenSSL BN\_mod\_exp\_mont\_consttimex2} function.}~\\
Since the \texttt{BN\_mod\_exp\_mont\_consttimex2} function provides vectorized computations for sizes from 1024 to 2048 bits, we provide  the comparison between our work and the \texttt{OpenSSL} function for these sizes (Table~\ref{tab:ifma256FWExp}). We compare our batch implementation, which computes 4 exponentiations simultaneously, with two successive runs of the \texttt{OpenSSL BN\_mod\_exp\_mont\_consttimex2}.

\begin{table}[htp]
	\caption{Batch of 4 Modular fixed-window exponentiations, $\# 10^3$ clock cycles.}
	\label{tab:ifma256FWExp}
		\begin{center}
		\small
			\begin{tabular}{|c|c||*{3}{c|}|}
				\hline
			\multicolumn{5}{|c||}{\large \bf $\times 4$ Modular Fixed-Window exponentiation ( $\times10^3$\#cc)}\\
				\hline
				\multicolumn{2}{|c||}{\multirow{2}{*}{Modulus Size bits}	} &	 \multirow{2}{*}{$1024$}	&	 \multirow{2}{*}{$2048$}		& ratio\\
                \multicolumn{2}{|c||}{}	&&&2048/1024\\
				\hline
				\hline
				\multicolumn{2}{|c||}{{ \texttt{OpenSSL}  \texttt{BN\_mod\_exp\_mont\_consttimex2}} ($\times 2$)}    
    &	1824  &	10837	&   5.94\\
				\hline
				\hline
				\bf Batch & \bf Multiplication & \multicolumn{3}{|c||}{\bf This work}\\
    \hline
    \hline
               		Algorithm~\ref{alg:8MontRed} \&~\ref{alg:8BlockMontRed} & Schoolbook  &    1678    &   11632     & 6.96\\
                \hline
                	 Algorithm~\ref{alg:8MontRed}	&	 Truncated Schoolbook &	\bf\underline{1203}	&\bf\underline{10290}  &   8.59\\
                  \hline
				\multicolumn{2}{|c||}{speedup truncated vs \texttt{BN\_mod\_exp\_mont\_consttimex2}}	&	\bf 1.52 &	\bf 1.05	&\\	
               \hline
			\end{tabular}
		\end{center}
\end{table}

Some comment on this table about the 256-bit versions of our modular exponentiation:

\begin{itemize}
    \item The best speedup is achieved for the 1024-bit modulus size. Our implementation using the Truncated Montgomery reduction provides a speedup of 1.51 (window width = 5).
    \item Our 1024-bit modulus implementation using the conventional Montgomery reduction is also slightly faster than the \texttt{BN\_mod\_exp\_mont\_consttimex2} function, with a 1.09 speedup (window width = 4).
    \item For the 2048-bit modulus, our implementation using the conventional Montgomery reduction is slightly slower than the \texttt{BN\_mod\_exp\_mont\_consttimex2} function. We remind here that for a square-and-multiply variant algorithm based implementation, the theoretical ratio should be 8 between the 2048-bit and the 1024-bit versions. These ratios are indicated in the right column of Table~\ref{tab:ifma256FWExp}. One can see that this ratio is better in the \texttt{OpenSSL} case, especially compared to our versions using the Truncated Montgomery reduction. 
    As a conclusion, compared to the \texttt{BN\_mod\_exp\_mont\_consttimex2} function, while our implementation using the conventional Montgomery reduction is slightly slower by around 7 \% (window width = 5), the one using our truncated approach for the modular reduction gives a 1.05 speedup (window width = 4).
\end{itemize}

As a conclusion, this shows the potential of the Truncated Montgomery reduction applied to various implementation situations. In every case explored here, our proposed approach is, to the best of our knowledge, faster than the state-of-the-art implementations.

\section{Conclusion}

In this paper, we present new software implementations using \texttt{AVX512} instruction set and taking advantage of the \texttt{VPMADD52} instructions, which compute a vectorized fused multiplication-addition. We implemented multi-precision multiplications and squarings, for sizes from 260 to 4154 bits. We used these implementations in Montgomery modular multiplications and squaring along with CIOS Montgomery multiplications. We also present a new approach of Truncated Montgomery multiplication computing the most significant higher half part of one of the multiplications involved in the Montgomery modular reduction in order to speedup the computation. Our implementations of this approach are more than 4 times faster than the \texttt{OpenSSL} ones. Moreover, used in fixed-window exponentiations of sizes $1024$, $2048$ and $4096$ bits, compared to \texttt{BN\_mod\_exp\_mont\_consttime}, the best speedups are respectively 3.98, 3.71, 3.37 for our implementations using our new Truncated Schoolbook or Karatsuba Montgomery modular multiplications. Compared to \texttt{BN\_mod\_exp\_mont\_consttimex2} using \texttt{madd52*} in 256-bit registers, in fixed-window exponentiations of sizes $1024$ and $2048$, our \texttt{AVX512} implementations provide speedups of 1.75 and 1.38 respectively, while their 256-bit counterparts give speedups of 1.51 and 1.05 for $1024$ and $2048$-bit sizes (batch of 4 values in this case).

The speedups are good because the batch operations are highly parallel and can be easily vectorized. Similar results could be obtained on other processors whose architecture has SIMD instruction sets. The NEON instruction set on ARM processors \cite{neonintrinsics} offers such possibilities. However, these instructions are not exactly identical to \texttt{AVX512} extensions. For example, \texttt{VPMADD52} instructions have only a 32-bit equivalent on this architecture. This could lead to a different word slicing of the operands. Very significant gains over sequential implementations can be reach, but may not be the same to those with \texttt{AVX512} instructions.

\paragraph{Perspectives.}
The improvements presented in this work could be adapted to other contexts as well.  Batch computations have demonstrated their value and could be utilized, for instance, in post-quantum schemes.  While our study of Truncated Montgomery multiplication was focused on RSA, there are other potential applications worth studying. Indeed, schemes relying on supersingular elliptic curves like pairing-based cryptography (see \cite{mrabetJ2017} and \cite{DuquesneL2006}) or isogeny based post-quantum protocols (see \cite{FeoKLPW2020}), also require large integer modular multiplications.

In addition, homomorphic encryption protocols based on large integers may take advantage of our approach, among them, Coron \emph{et al.} \cite{CoronMNT2010} and Dyer \emph{et al.} \cite{DyerDX2019}.


\bibliography{biblio.bib}

\end{document}